\title{Enumeration of Sequences with Large Alphabets}
\titlerunning{Enumeration of $\sigma$--ary sequences}
\author{M. O\u guzhan K\"ulekci}
\institute{T\"UB\.ITAK - B\.ILGEM \\
National Research Institute of Electronics and Cryptology\\
\email{oguzhan.kulekci@tubitak.gov.tr}}
\authorrunning{M.O. K\"ulekci}
\begin{document}
\maketitle

\begin{abstract} 
A binary sequence of length $n$ with $w$ ones can be identified by its lexicographical rank in the set of all binary
sequences with same number of ones and zeros, which is of size $\frac{n!}{w!\cdot (n-w)!}$.  Although that
enumeration has been deeply studied for binary case, it is less addressed for $\sigma$-ary sequences, where $\sigma>2$.
 
Assuming $n$ is a fixed predetermined parameter, the enumerative coding of a given $n$-symbols long sequence $S$ over
alphabet $\Sigma=\{\epsilon_1, \epsilon_2, \ldots, \epsilon_\sigma\}$ consists of first specifying the frequencies of
characters in $S$ denoted by $C=\langle c_1, c_2, \ldots, c_\sigma \rangle$, followed by indicating the rank of $S$
among all possible sequences with the same $C$ vector.  
The required code lengths are  $(\sigma-1)\lceil \log (n+1) \rceil$ bits for $C$ and 
$\lceil \log \frac{n!}{c_1! \cdot c_2! \cdot \ldots \cdot c_\sigma!} \rceil$ bits for the rank. 

This study focuses on efficient schemes for enumerative coding of $\sigma$--ary sequences by mainly borrowing ideas
from \"Oktem \& Astola's \cite{Oktem99} hierarchical enumerative coding  and Schalkwijk's \cite{Schalkwijk72}
asymptotically optimal combinatorial code on binary sequences. 

By observing that the number of distinct $\sigma$--dimensional vectors having an inner sum of $n$, where the values in
each dimension are in range $[0\ldots n]$  is  $K(\sigma,n) = \sum_{i=0}^{\sigma-1} {{n-1} \choose {\sigma-1-i}}
{{\sigma} \choose {i}}$,  we propose representing $C$ vector via enumeration, and present necessary
algorithms to perform this task. 
We prove $\log K(\sigma,n)$ requires approximately $ (\sigma -1) \log (\sigma-1) $ less bits than
the naive 
$(\sigma-1)\lceil \log (n+1) \rceil$ representation for relatively large $n$, and examine the results for varying
alphabet sizes
experimentally. 

We extend the basic scheme for the enumerative coding of $\sigma$--ary sequences by introducing a new method
for large alphabets. We experimentally show that the newly introduced technique is superior
to the basic scheme by providing experiments on DNA sequences. 

\end{abstract}

\section{Introduction}

Let $S$ be a finite sequence of length $n$, where symbols are drawn from alphabet $\Sigma=\{\epsilon_1,\epsilon_2,\ldots
, \epsilon_\sigma\}$ and the number of occurrences  of individual characters are indicated by the vector  $C=\langle
c_1, c_2, \ldots ,c_\sigma\rangle$.

The $k$th-order \emph{empirical entropy} of $S$ has been defined by Manzini \cite{Manzini2001} as
$$H_k(S) = \frac{1}{n}\sum_{\forall w \in \Sigma^k} |w_S| \cdot H_0(w_S)$$
, where $w_S$ represents the string formed
by concatenation of all symbols following the $k$--symbols long context $w$ in $S$, and $H_0$ is the zeroth-order
entropy defined by $H_0(s)= - \sum_{ \forall \epsilon \in \Sigma} P(\epsilon) \cdot \log P(\epsilon)$. Note that
$P(\epsilon)$ is the probability of symbol $\epsilon$ in $s$.

On the other hand, Grossi \emph{et.al} \cite{finiteset} proposed \emph{finite set entropy} that approximates the $k$-th
order information content of $S$ as 
$$H_k(S) = \frac{1}{|S|} \sum_{\forall w \in \Sigma^k} \log \bigl( \frac{c_w!}{c_{w\epsilon_1!} \cdot c_{w\epsilon_2!}
\ldots c_{w\epsilon_\sigma!} }\bigr) $$
, where $c_w$ is the frequency of $k$--symbols long context and similarly $c_{w\epsilon_i}$ is the observation count of
$\epsilon_i$ following context $w$ in $S$.  The zeroth order finite set entropy of $S$ is then 
$H_0(S) = log \bigl( \frac{n!}{c_1! \cdot c_2! \cdot \ldots \cdot c_\sigma! } \bigr)$. 

The idea behind finite set entropy is that $S$ can be identified by its rank in the sorted list of all sequences having
the same frequencies of individual symbols. 
Similarly, given the position index and the frequencies, it is possible to generate the original sequence $S$. 

The enumerative coding is more related with the finite set entropy, where Huffman or arithmetic coding is computed
according to the empirical entropy estimation via a Markovian approach in practice.

The number of ones, $w$, in a given binary sequence of length $n$ can be specified by using $\lceil \log (n+1) \rceil$ 
bits  as $w$ can take one of $(n+1)$ values in range $[0\ldots n]$. 
The list of all unique length--$n$ binary sequences that have exactly $w$ ones and $n-w$ zeros is of size 
$\big(\frac{n!}{(n-w)!\cdot w!}\big)$. 
Any sequence among this set can be identified by indicating its lexicographical rank, which requires 
$\lceil \log \big(\frac{n!}{(n-w)!\cdot w!}\big) \rceil$ bits. 

Lynch \cite{Lynch66} and  Davisson \cite{Davisson66} proposed that scheme to represent binary sequences, which is 
named as enumerative or combinatorial coding. 
More general frameworks has been investigated by Cover\cite{Cover73} and Rissanen\cite{Rissanen83}. 
A comparison against  arithmetic coding has been given by Cleary \& Witten \cite{Cleary84}. 

Although $\log \big(\frac{n!}{(n-w)!\cdot w!}\big)$ is optimum due to the entropy of a finite set, the 
$\lceil \log (n+1) \rceil$ bits needed to code the weight information cause an overhead. 
This overhead becomes more significant in case of $\sigma$--ary sequences. 
The naive encoding of the frequency vector $C$ requires  $(\sigma-1)\lceil \log (n+1) \rceil$ bits assuming $n$ is fixed
predetermined parameter, which otherwise necessitates  $\sigma \lceil \log (n+1) \rceil$ bits in case of variable $n$. 
Several studies have addressed this fact and proposed alternatives for binary case. However, enumerative coding of
sequences over large alphabets has not received much attention to date. 

\textbf{Our Contribution.} This study focuses on enumeration of $\sigma$--ary sequences.
We show that the number of distinct $\sigma$--dimensional vectors having a constant inner sum of $n$, where the values
in each dimension are in range $[0\ldots n]$,  is $K(\sigma,n) = \sum_{i=0}^{\sigma-1} {{n-1} \choose {\sigma-1-i}}
{{\sigma} \choose {i}}$. We have noticed that \"Oktem\&Astola \cite{Oktem99} previously indicated this
number\footnote{However, we could not find the proof or a derivation of the formula in their regarding paper.} as
${n+\sigma-1} \choose {\sigma -1}$, which is in concordance with our finding according to the
Chu-Vandermonde  identity \cite{Koepf98}. 
We prove $\log K(\sigma,n)$ requires approximately $ (\sigma -1) \log (\sigma-1)$ less bits than  
$(\sigma-1)\lceil \log (n+1) \rceil$ for relatively large $n$, and examine the results for varying alphabet sizes
experimentally. 

We extend the basic scheme for the enumerative coding of $\sigma$--ary sequences by introducing a new method, that
borrows ideas from Schalkwijk's \cite{Schalkwijk72} enumerative coding of binary sequences via variable length blocks. 
We experimentally show that the newly introduced technique is superior to the basic scheme by providing experiments on
DNA sequences. 

The results indicated herein are of significance not only in terms of enumerative coding, but also for the related
fields such as enumerative combinatorics \cite{Stanley}, text indexing\cite{finiteset,Martin10}, information
theoretic analyses of sequences \cite{infotheory}, and jumbled pattern matching \cite{jsm}. 

The outline of the paper is as follows. We investigate the related previous studies in section $2$. We introduce the
compact representation of frequency vectors in section $3$ with the necessary enumeration algorithms. In section $4$ we
describe the proposed $q$--ary enumeration scheme based on factorizing the input sequence into variable--length blocks.
We give the implementation details and experimental results on DNA sequences in section $5$ and conclude by section $6$.

\section{Related Work}

Since the studies concerning enumerative coding had generally focused on binary sequences, previous efforts for the
efficient representation of weight $w$ had also appeared mainly on the binary alphabet.

In the scheme proposed by Schalkwijk  \cite{Schalkwijk72} the sequence is partitioned into variable length blocks,
where each block contains either $w$ ones or $(n-w)$ zeros. During the partitioning, the sequence is
scanned from left to right and whenever either the specified number of zeros or ones is obtained, the process stops.
The subsequence is then generated by concatenating the symbols since the last cut point  and   hypothetically padded
with the remaining ones or zeros, so that its length becomes $n$. 
The rank in the lexicographically sorted list of all binary $n$-bits long sequences with $w$ ones uniquely identifies 
that subsequence. The partitioning and coding continues in the same way until the end of the input is reached.

As an example, assume
the sequence is $1100100010100111$, and the parameters are $w=3$, $n=5$. 
The subsequences with padded bits shown between brackets are 
$1100$\{$1$\}, $100$\{$11$\}, $010$\{$11$\}, $100$\{$11$\}, and $111$\{$00$\}, which can be denoted by their
lexicographic ordering among the $\frac{5!}{3!\cdot 2!} = 10$ possible cases of $00\{111\}$, $010\{11\}$, $0110\{1\}$,
$0111\{0\}$, $100\{11\}$, $1010\{1\}$, $1011\{0\}$, $1100\{1\}$, $1101\{0\}$, $111\{00\}$. In the decoding phase,
fixed-length code words are read, and mapped to their corresponding sequences. It is easy to detect the padded bits
again with the same methodology of factorization as once $w$ ones or $(n-w)$ zeros are observed in a subsequence, it can
be determined that the rest of the string would be composed of all ones or zeros that are not part of the original
sequence.

The length of the code words are fixed and the blocks are variable in Schalkwijk's proposal. 
Hence, $w$ and $n$ are predetermined parameters that does not need to be coded separately in each block, which brings
an advantage.  
On the other hand, we have an overhead due to the padded bits. 
However, the enumeration scheme of Schalkwijk is 
shown to be asymptotically optimum for binary memoryless channels \cite{Schalkwijk72}. 
Another difficulty in practice is finding the optimum values of $w$ and $n$.

\"Oktem \& Astola \cite{Oktem99} introduced  hierarchical enumerative coding for binary sequences. They divide the
sequence into smaller subsequences. The main idea is that once the total number of ones in the larger sequence is
known, the weights of the smaller length blocks can be coded more efficiently. 
For example, we divide the sample binary sequence $1100100010100111$ into four equal length blocks as $1100$, $1000$,
$1010$, $0111$, where the weights of
each block are $2$, $1$, $2$, and $3$ respectively. If we know in advance that the main sequence is of
length $16$ with a total weight of $8$ and each substring is $4$ bits long, then we can count the ways that $4$
non-negative integers sum up to $8$, and simply represent the vector $\langle 2,1,2,3 \rangle$ via its rank in that
set. They proposed to perform this in a tree structure, where the weights of the child nodes are determined by their
ancestors (see \cite{Oktem99,Oktemthesis} for detailed discussions), and hence, given name hierarchical enumeration.  

Another study concerning to reduce the overhead of transmitting the weight parameter has been presented by Dai\&Zakhor
\cite{Dai03}. In their work the authors proposed to model the distribution of $w$  values via a Poisson process, and
then code individual weights by Huffman coding according to their probabilities in the estimated distribution. 
 
\section{Enumerating the frequency vectors}

The overhead in enumerative coding of $S$ is the information required to represent the individual symbol frequencies.
Assuming that the length $n$ of $S$ is predetermined, the naive way of representing the frequency vector 
$C = \langle c_1, c_2, \ldots, c_\sigma \rangle$ requires $(\sigma-1)\log (n+1)$ bits. Notice that $\sigma-1$ items are
enough as we can compute the last frequency by subtracting the total frequency of the remaining items from $n$.  

In this study we propose to represent $C$ vector via enumeration. 
Theorem $1$ with lemma $1$ below counts the distinct $\sigma$--dimensional vectors that the sum of the individual
components is predefined. In other words, we count the ways that $\sigma$ counters sum up to a specified value when
we know in advance each counter is greater than or equal to zero. 

\begin{lemma}
 The number of ways that $K$ counters, whose values are positive integers,  sum up to $N$ is 
\begin{equation}
 KsumN(K,N) = {{N-1} \choose {K-1}}.
\end{equation}
\end{lemma}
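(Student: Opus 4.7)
The plan is to recognize this as the classical count of compositions of $N$ into $K$ positive parts, which admits a short bijective proof via the standard stars-and-bars argument.

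First I would introduce the set of tuples being counted: ordered tuples $(x_1, x_2, \ldots, x_K)$ of positive integers with $x_1 + x_2 + \cdots + x_K = N$. Then I would set up a bijection between this set and the collection of $(K-1)$-element subsets of $\{1, 2, \ldots, N-1\}$. The natural map is to send $(x_1, \ldots, x_K)$ to the sequence of partial sums $y_i = x_1 + x_2 + \cdots + x_i$ for $i = 1, \ldots, K-1$. Since every $x_i$ is a positive integer, these partial sums satisfy $1 \le y_1 < y_2 < \cdots < y_{K-1} \le N-1$, giving a well-defined $(K-1)$-element subset of $\{1, \ldots, N-1\}$.

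The second step is to verify that this map is a bijection. For injectivity, the original tuple can be recovered from the partial sums by taking successive differences $x_i = y_i - y_{i-1}$ (with the conventions $y_0 = 0$ and $y_K = N$). For surjectivity, any choice of $K-1$ distinct values in $\{1, \ldots, N-1\}$, listed in increasing order as $y_1 < \cdots < y_{K-1}$, yields positive integer differences summing to $N$. Once the bijection is established, the count is immediately $\binom{N-1}{K-1}$ by the definition of the binomial coefficient.

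No step here is a real obstacle; the only subtlety is stating the index ranges correctly so that the strict inequalities $y_i < y_{i+1}$ follow from the strict positivity of the $x_i$, and so that the extreme values $y_1 \ge 1$ and $y_{K-1} \le N-1$ hold. An alternative plan, if a purely algebraic argument is preferred, would be induction on $K$: fix the last part $x_K = j \in \{1, \ldots, N-K+1\}$, apply the inductive hypothesis to get $\binom{N-j-1}{K-2}$ compositions of the remaining $N-j$, and collapse the sum $\sum_{j=1}^{N-K+1} \binom{N-j-1}{K-2}$ to $\binom{N-1}{K-1}$ via the hockey-stick identity. The bijective route is cleaner and is what I would present.
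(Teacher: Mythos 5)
Your proof is correct and is essentially the same argument as the paper's: the paper cuts $K-1$ of the $N-1$ wires joining a chain of $N$ balls, which is exactly your choice of the $K-1$ partial sums $y_1 < \cdots < y_{K-1}$ from $\{1,\ldots,N-1\}$. Your version merely makes the bijection explicit where the paper leaves it informal, so there is nothing to reconcile.
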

\begin{proof}
Assume we have a chain of $N$ balls connected by single wires. Obviously we have
$N-1$ connections. When we cut $K-1$ of them, we will be left with $K$ partitions. The number of balls on
each partition identifies the value of the corresponding counter. Since this selection can be performed
in $N-1 \choose K - 1$ ways, this is the total number of distinct  ways that $K$ counters sum up to $S$.
\end{proof}

\begin{theorem}
Let $\mathcal{V}= \langle v_1,v_2,\ldots,v_\sigma \rangle$ be a $\sigma$-dimensional vector of integers,  where $v_i
\geq 0$ for all $0<i\leq \sigma$ and $\sum_{i} v_i = n$. The number of such distinct vectors is 
\begin{equation} 
K(\sigma,n) = \sum_{i=0}^{\sigma-1} \left( \begin{array}{c} n-1 \\ \sigma-i-1 \end{array}\right) \left(
\begin{array}{c} \sigma \\ i \end{array}\right) = {{n+\sigma-1} \choose {\sigma-1}} 
\end{equation}
\end{theorem}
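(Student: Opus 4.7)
The plan is to reduce the non-negative composition count to Lemma~1 by stratifying according to how many coordinates of $\mathcal V$ are zero, and then to convert the resulting sum into the closed form via Chu--Vandermonde.

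First I would partition the set of admissible vectors by the number $i$ of coordinates $v_j$ that equal $0$. For a fixed $i$ with $0\le i\le \sigma-1$ (we cannot have $i=\sigma$ since the components must sum to $n\ge 1$), choosing which of the $\sigma$ coordinates vanish contributes a factor of $\binom{\sigma}{i}$. The remaining $\sigma-i$ coordinates are strictly positive integers summing to $n$, and Lemma~1 directly gives $\binom{n-1}{\sigma-i-1}$ such compositions. Since these strata are disjoint and exhaustive, summing over $i$ produces
\begin{equation*}
K(\sigma,n) \;=\; \sum_{i=0}^{\sigma-1} \binom{\sigma}{i}\binom{n-1}{\sigma-i-1},
\end{equation*}
which is exactly the left-hand expression in the theorem.

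Second, I would identify this sum as an instance of the Chu--Vandermonde identity $\sum_{i} \binom{a}{i}\binom{b}{c-i} = \binom{a+b}{c}$ applied with $a=\sigma$, $b=n-1$, and $c=\sigma-1$. The index range $0\le i\le \sigma-1$ already covers all terms in which both binomial coefficients are nonzero (for $i=\sigma$ the second factor $\binom{n-1}{-1}$ vanishes), so extending the summation to all integers is harmless and yields $\binom{n+\sigma-1}{\sigma-1}$, matching the right-hand expression.

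I do not anticipate a serious obstacle: the combinatorial step is a clean case analysis once Lemma~1 is available, and the algebraic step is a textbook Vandermonde convolution. The only thing to watch carefully is the index bookkeeping, namely verifying that the $i=\sigma$ term legitimately drops out so that the partial sum in the statement agrees with the full Vandermonde sum, and confirming the edge case $n\ge 1$ so that Lemma~1 applies to each nonempty stratum.
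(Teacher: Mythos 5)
Your proposal is correct and follows essentially the same route as the paper's own proof: stratifying by the number $i$ of zero coordinates, counting the choice of vanishing coordinates by $\binom{\sigma}{i}$ and the positive compositions of $n$ into $\sigma-i$ parts via Lemma~1, then closing the sum with Chu--Vandermonde. Your extra care with the vanishing $i=\sigma$ term and the implicit $n\geq 1$ assumption is a slight tightening of the paper's argument, not a different approach.
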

\begin{proof}
Among the $\sigma$ dimensions of $\mathcal{V}$, at least one of them is greater than zero and at most all of them are
non-zero. 
If we represent the number of zero values by $i$, the summation iterates over $i$ from $0$ to $\sigma - 1$. 
The second binomial coefficient $\sigma \choose i$ in the summation considers number of ways that we can choose $i$
from $\sigma$ dimensions, where the first binomial coefficient $n-1 \choose \sigma - i - 1$ counts the
distinct ways that $n$ items can be partitioned into  $\sigma-i$ urns. By the Chu-Vandermonde  identity \cite{Koepf98}
this summation becomes ${{n+\sigma-1} \choose {\sigma-1}}$.
\end{proof}

\begin{lemma}
Let $\mathcal{V}= \langle v_1,v_2,\ldots,v_\sigma \rangle$ be a $\sigma$-dimensional vector of integers,  where $v_i
\geq 0$ for all $0<i\leq \sigma$ and $\sum_{i} v_i = n$. The enumerative coding of $C$ requires approximately
$(\sigma-1) \log (\sigma-1)$ bits less then naive representation of $(\sigma -1) \log (n+1)$ for $\sigma>2$ and 
$n$ is relatively larger than $\sigma$.
\label{lemma:2}
\end{lemma}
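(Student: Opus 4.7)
The plan is to start directly from the closed form $K(\sigma,n)=\binom{n+\sigma-1}{\sigma-1}$ given by the theorem, take its logarithm, and then pull out the two dominant terms: one that matches the naive $(\sigma-1)\log(n+1)$ cost, and one correction of size $(\sigma-1)\log(\sigma-1)$ that accounts for the savings.

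First I would write
\[
K(\sigma,n) \;=\; \binom{n+\sigma-1}{\sigma-1} \;=\; \frac{(n+\sigma-1)(n+\sigma-2)\cdots(n+1)}{(\sigma-1)!},
\]
so that
\[
\log K(\sigma,n) \;=\; \sum_{j=1}^{\sigma-1}\log(n+j) \;-\; \log(\sigma-1)!.
\]
Under the assumption $n\gg \sigma$, each of the $\sigma-1$ terms $\log(n+j)$ differs from $\log(n+1)$ by $O(\sigma/n)$, so the first sum can be written as $(\sigma-1)\log(n+1)+o(1)$. This already isolates the naive cost and shows that the remaining deficit, up to vanishing terms, equals $\log(\sigma-1)!$.

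Next I would apply Stirling's approximation to that factorial, namely
\[
\log(\sigma-1)! \;=\; (\sigma-1)\log(\sigma-1)\;-\;(\sigma-1)\log e \;+\; \tfrac{1}{2}\log\bigl(2\pi(\sigma-1)\bigr)+o(1),
\]
and keep only the leading term $(\sigma-1)\log(\sigma-1)$. Combining with the previous step yields
\[
\log K(\sigma,n) \;\approx\; (\sigma-1)\log(n+1)\;-\;(\sigma-1)\log(\sigma-1),
\]
which is exactly the savings claimed in the lemma. The hypothesis $\sigma>2$ ensures $\log(\sigma-1)>0$, so the bound is nontrivial.

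The main obstacle is really just bookkeeping of lower-order terms: we must argue that the $O(\sigma^2/n)$ error from replacing each $\log(n+j)$ by $\log(n+1)$, together with the $O(\log \sigma)$ Stirling remainder $-(\sigma-1)\log e+\tfrac{1}{2}\log(2\pi(\sigma-1))$, are both dominated by the two terms retained, whenever $n$ is sufficiently larger than $\sigma$. This is precisely what the qualifier ``$n$ is relatively larger than $\sigma$'' in the lemma is meant to capture, and making it quantitative (for instance by requiring $n\geq \sigma^2$) would turn the ``approximately'' into an explicit inequality if desired.
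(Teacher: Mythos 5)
Your proof is correct, and it takes a genuinely different route from the paper. The paper applies Stirling's approximation to all three factorials in $\binom{n+\sigma-1}{\sigma-1}$ at once, rearranges the resulting expression, discards the square-root factor, and then invokes $\log(n+\sigma-1)\approx\log n$ twice; you instead expand the binomial coefficient as the product $\frac{(n+1)(n+2)\cdots(n+\sigma-1)}{(\sigma-1)!}$, so that $\log K(\sigma,n)=\sum_{j=1}^{\sigma-1}\log(n+j)-\log(\sigma-1)!$, isolate the naive cost $(\sigma-1)\log(n+1)$ directly from the product (with an honestly vanishing $O(\sigma^2/n)$ error), and apply Stirling only to the $n$-independent factorial $(\sigma-1)!$. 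Your decomposition is more elementary and makes the error structure transparent; in particular, it exposes that the true savings is $\log(\sigma-1)!$ and that replacing it by $(\sigma-1)\log(\sigma-1)$ discards a $\Theta(\sigma)$ term $(\sigma-1)\log e$. It is worth noting that the paper's proof silently discards exactly the same quantity: its step ``$\log(n+\sigma-1)\approx\log n$'' applied to the term multiplied by $n$ drops $n\log\bigl(1+\frac{\sigma-1}{n}\bigr)\to(\sigma-1)\log e$, so the two arguments land on the same approximation, but yours makes the cost of it visible while the paper's hides it inside an innocuous-looking substitution. Two small corrections to your bookkeeping: the Stirling remainder $-(\sigma-1)\log e+\frac{1}{2}\log\bigl(2\pi(\sigma-1)\bigr)$ is $O(\sigma)$, not $O(\log\sigma)$ as you label it, and it is not dominated by taking $n$ large --- it becomes relatively negligible only when $\sigma$ itself is large compared to $\log e$; since the lemma claims only an approximate equality and the paper's own derivation incurs the identical discrepancy, this does not affect the validity of your proof, but your quantitative remark (e.g.\ requiring $n\geq\sigma^2$) would control only the $O(\sigma^2/n)$ term, not this one.
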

\begin{proof}
 According to Stirling's approximation of $z! \approx \sqrt{2 \pi z} (\frac{z}{e})^z$, the $K(\sigma,n) = {{n+\sigma-1}
\choose {\sigma-1}}$ can be estimated as 
\begin{equation}
K(\sigma,n) \approx \sqrt{\frac{(n+\sigma-1)}{2 \pi n (\sigma-1)}} \cdot
\frac{(n+\sigma-1)^{(n+\sigma-1)}}{n^n \cdot (\sigma-1)^{(\sigma-1)}} 
\label{eq1}
\end{equation}

Rearranging the terms in equation \ref{eq1} yields 
\begin{equation}
K(\sigma,n) \approx \sqrt{ \frac{n}{2 \pi n (\sigma-1)} + \frac{(\sigma-1)}{2 \pi n (\sigma-1)}} 
\cdot
\bigl(\frac{n+\sigma-1}{n}\bigr)^n 
\cdot  
\bigl(\frac{n+\sigma-1}{\sigma-1}\bigr)^{\sigma-1}.  
\label{eq2}
\end{equation}

The square-root term is less than one, and hence, the logarithm of $K(\sigma,n)$ can be approximated as 
\begin{eqnarray}
\log K(\sigma,n) & \approx & n[\log (n+\sigma-1) - \log n]+(\sigma-1)[\log (n+\sigma-1)-\log (\sigma-1)] \nonumber \\
                 &  = & (n+\sigma-1)\log (n+\sigma-1) - n\log n - (\sigma-1)\log(\sigma-1)
\label{eq3}
\end{eqnarray}

Assuming $\log (n+\sigma-1) \approx \log n$, when $\sigma>2$ and $n$ is relatively larger than $\sigma$, equation
\ref{eq3} becomes 
\begin{eqnarray}
\log K(\sigma,n) & \approx & (n+\sigma-1)\log n - n\log n - (\sigma-1)\log(\sigma-1) \nonumber \\
                 & =       & (\sigma-1)\log n - (\sigma-1)\log(\sigma-1)
\label{eq4}
\end{eqnarray}

Thus, considering $\log(n+1) \approx \log n$, the enumerated representation of a frequency vector is 
approximately $(\sigma-1)\log (\sigma-1)$ bits less than the naive $(\sigma-1)\log(n+1)$ representation.
\end{proof}

Figure \ref{fig:1} compares number of bits required in enumerative coding of vectors versus naive
representation for alphabet sizes of $4$, $20$, $128$, and $256$. For varying sequence lengths of $n$, where $n<1000$,
$(\sigma-1)\log n$ versus $\log K(\sigma.n)$ values are plotted. The advantage obtained by the
proposed enumerative coding becomes more significant with the increasing alphabet sizes, and the gain tends to converge
to the theoretical value $(\sigma-1)\log (\sigma-1)$ as stated by Lemma \ref{lemma:2}.
\begin{figure}
\begin{center}
\begin{tabular}{cc}
 \includegraphics[scale = 0.40]{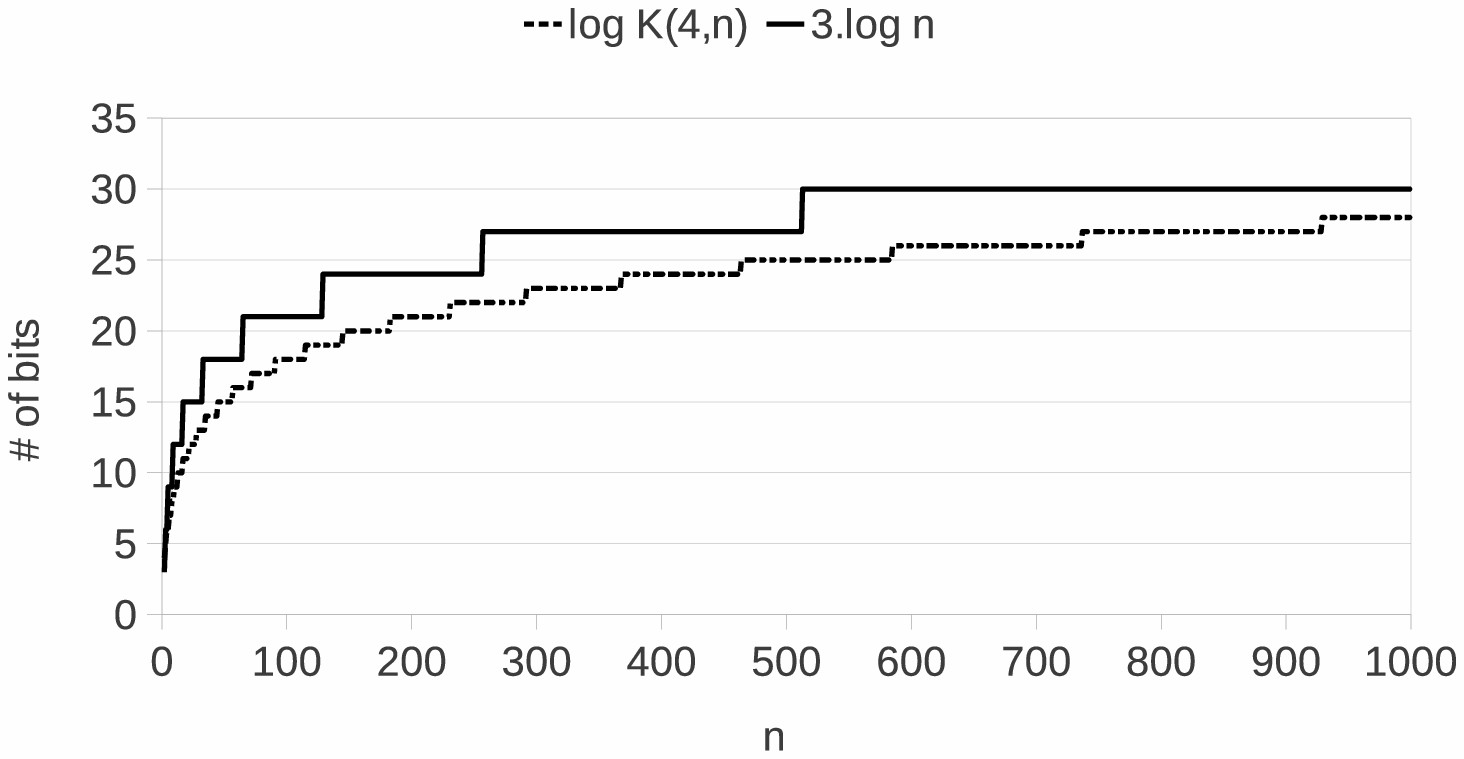} &
 \includegraphics[scale = 0.40]{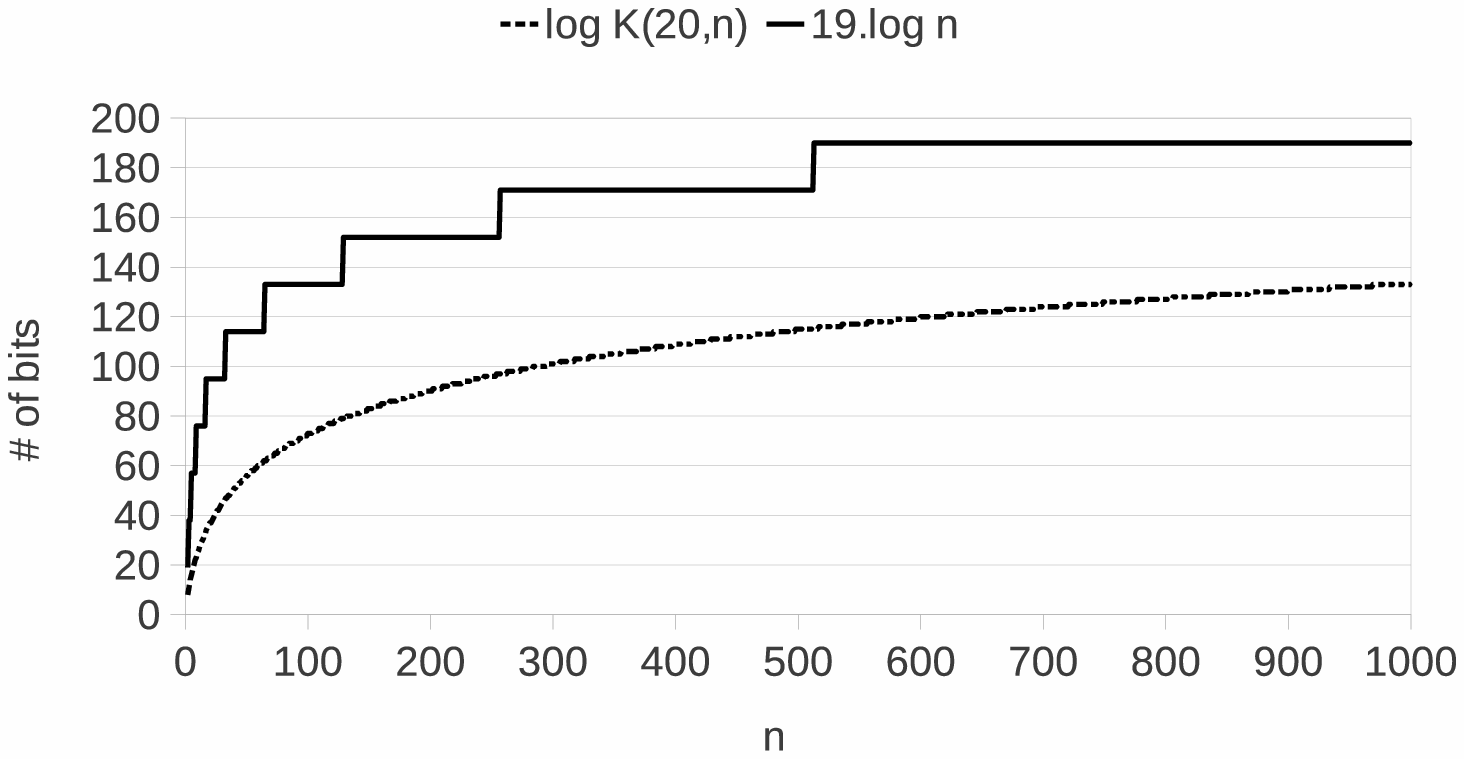} \\
a) $\sigma=4$ & b) $\sigma=20$ \\
 \includegraphics[scale = 0.40]{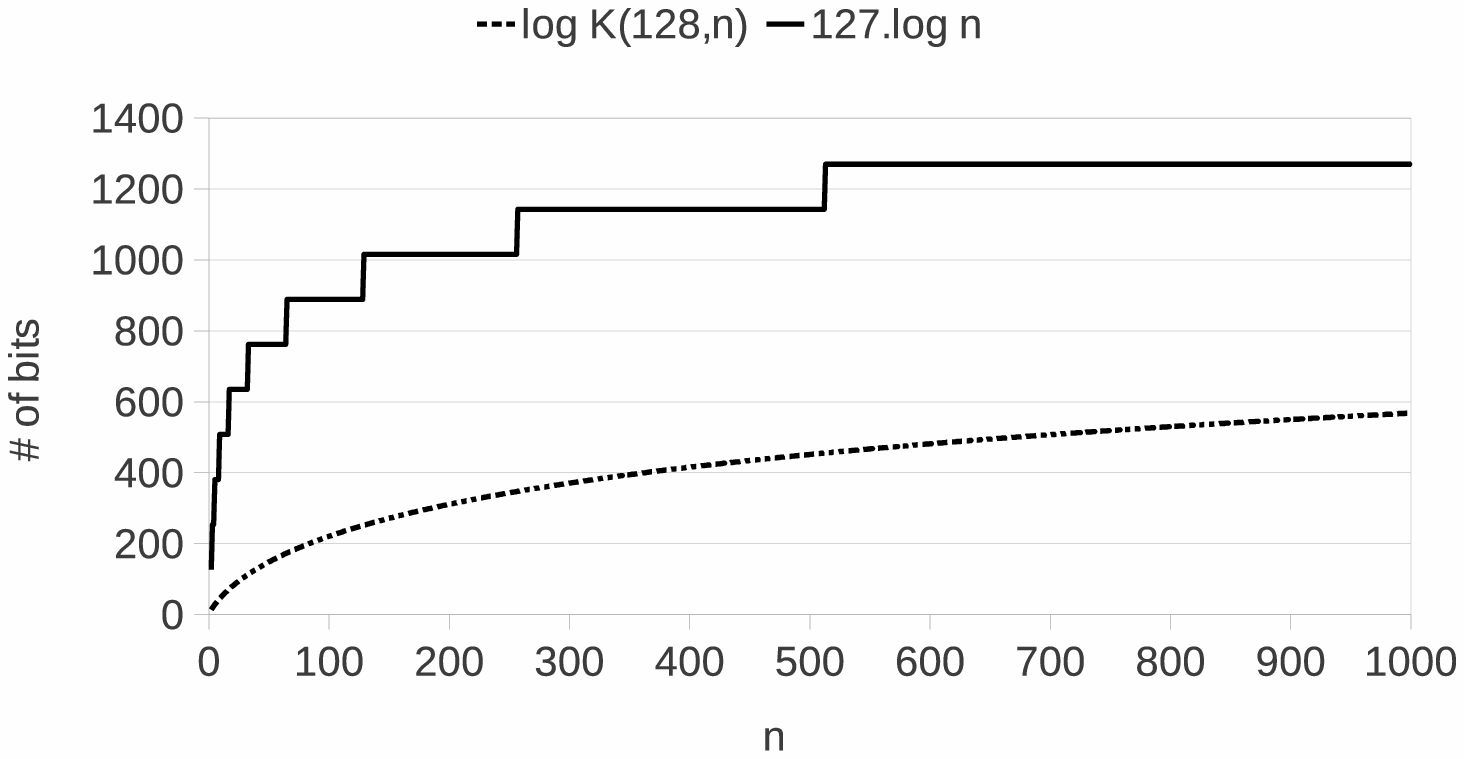} &
 \includegraphics[scale = 0.40]{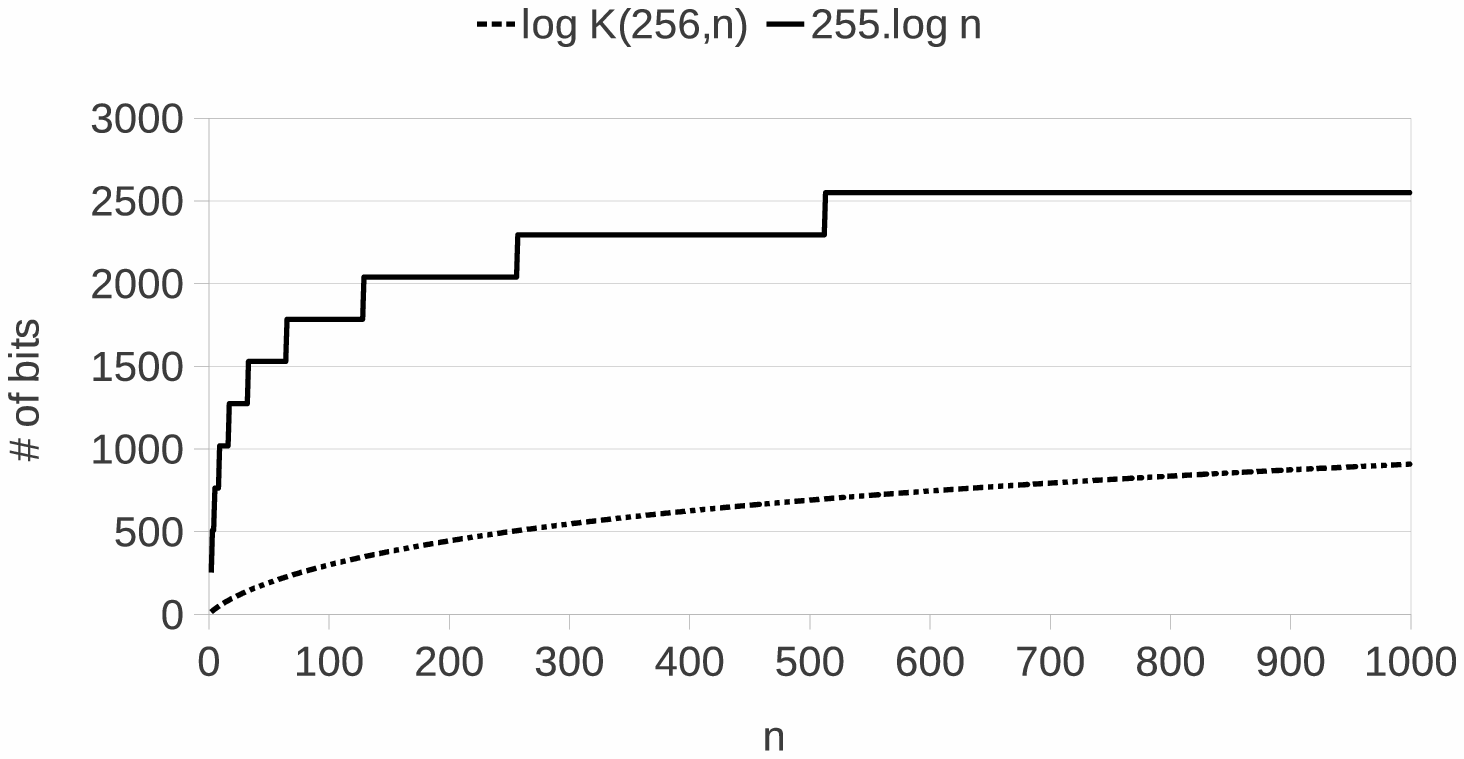} \\
a) $\sigma=128$ & d) $\sigma=256$ \\
\end{tabular}
\end{center}
\caption{Comparison of enumerated representation of frequency vectors versus naive representation in varying alphabet
sizes.}
\label{fig:1}
\end{figure}

The enumerative coding of the $C=\langle c_1, c_2, \ldots, c_\sigma \rangle$ requires a ranking among the
$\sigma$--dimensional vectors, whose inner sum is a predetermined value $n$. Once we have a mechanism to sort all such
vectors, we can identify any of them by indicating its rank among all possibilities.

We devise such a ranking, where $c_1$ and $c_{\sigma-1}$ are the most and least significant positions respectively.
The $c_\sigma$ value do not have an effect on the ordering since it is not a free variable as its value is determined
by $n - \sum_{1 \leq i< \sigma} c_i$. 

The individual vectors are enumerated in an ascending order according to their values of $c_i$'s. Table \ref{tab:1}
shows this ranking on an example case that we consider $4$ dimensional vectors having an inner sum of $4$. Notice that
there exist $K(4,4)=35$ of such vectors. 
\begin{table}
\renewcommand{\arraystretch}{1.2}
\begin{tabular*}{0.99\textwidth}{c@{\hspace{10pt}} @{\hspace{10pt}}c@{\hspace{10pt}}@{\hspace{10pt}}c@{\hspace{10pt}}
@{\hspace{10pt}}c@{\hspace{10pt}}@{\hspace{10pt}}c}

\begin{tabular}{c|c}
0 & $\langle 0,0,0,4 \rangle$\\ \hline
1 & $\langle 0,0,1,3 \rangle$\\ \hline
2 & $\langle 0,0,2,2 \rangle$\\ \hline
3 & $\langle 0,0,3,1 \rangle$\\ \hline
4 & $\langle 0,0,4,0 \rangle$\\ \hline
5 & $\langle 0,1,0,3 \rangle$\\ \hline
6 & $\langle 0,1,1,2 \rangle$\\ 
\end{tabular} 
&
\begin{tabular}{c|c}
7 & $\langle 0,1,2,1 \rangle$\\ \hline
8 & $\langle 0,1,3,0 \rangle$\\ \hline
9 & $\langle 0,2,0,2 \rangle$\\ \hline
10& $\langle 0,2,1,1 \rangle$\\ \hline
11& $\langle 0,2,2,0 \rangle$\\ \hline
12& $\langle 0,3,0,1 \rangle$\\ \hline
13& $\langle 0,3,1,0 \rangle$\\ 
\end{tabular} 
&
\begin{tabular}{c|c}
14& $\langle 0,4,0,0 \rangle$\\ \hline
15& $\langle 1,0,0,3 \rangle$\\ \hline
16& $\langle 1,0,1,2 \rangle$\\ \hline
17& $\langle 1,0,2,1 \rangle$\\ \hline
18& $\langle 1,0,3,0 \rangle$\\ \hline
19& $\langle 1,1,0,2 \rangle$\\ \hline
20& $\langle 1,1,1,1 \rangle$\\ 
\end{tabular} 
&
\begin{tabular}{c|c}
21& $\langle 1,1,2,0 \rangle$\\ \hline
22& $\langle 1,2,0,1 \rangle$\\ \hline
23& $\langle 1,2,1,0 \rangle$\\ \hline
24& $\langle 1,3,0,0 \rangle$\\ \hline
25& $\langle 2,0,0,2 \rangle$\\ \hline
26& $\langle 2,0,1,1 \rangle$\\ \hline
27& $\langle 2,0,2,0 \rangle$\\ 
\end{tabular}  
&
\begin{tabular}{c|c}
28& $\langle 2,1,0,1 \rangle$\\  \hline
29& $\langle 2,1,1,0 \rangle$\\ \hline
30& $\langle 2,2,0,0 \rangle$\\ \hline
31& $\langle 3,0,0,1 \rangle$\\ \hline
32& $\langle 3,0,1,0 \rangle$\\ \hline
33& $\langle 3,1,0,0 \rangle$\\ \hline
34& $\langle 4,0,0,0 \rangle$\\ 
\end{tabular} \\
&&&&
\end{tabular*}
\caption{$4$--dimensional vectors, whose inner sums are $4$, ordered according to the proposed method. }
\label{tab:1}
\end{table}  

The algorithm \textsc{Vector2index} shown in Figure \ref{fig:2} returns the rank of a given vector among the list of
all vectors having the same inner sum. We begin with the the most significant dimension $c_1$, and add index $i$ the
number
of vectors whose first dimension is less than $c_1$. Following the subtraction of the $c_1$ value from the inner sum
$\ell$ and decrementing the dimension by $1$, we continue with the next dimension $c_2$. We repeat the process in the
same way until all dimensions except the last position, which is due to the fact that $c_\sigma$ has no effect on the
index calculation as explained previously. 
\begin{figure}
\begin{center}
\begin{scriptsize}
\begin{tabular}{lr}
\begin{tabular}{|rl|}
\hline
\multicolumn{2}{|l|}{~\textsc{Vector2Index}}\\
\multicolumn{2}{|l|}{~\textsc{Input:} $C=\langle c_1,c_2,\ldots , c_\sigma\rangle$}\\
\multicolumn{2}{|l|}{~\textsc{Output:} Rank of $C$ among all }\\
\multicolumn{2}{|r|}{ vectors with the same inner sum.}\\
~\textsf{1.} & \textsf{$i \leftarrow 0$}\\ 
~\textsf{2.} & \textsf{$\ell = c_1 + c_2 + \ldots + c_\sigma$}\\ 
~\textsf{3.} & \textsf{for $dim \leftarrow 1$  to $\sigma-1$ do}~\\ 
~\textsf{4.} & \quad \textsf{for $j \leftarrow 0$  to $c_{dim}-1$ do}~\\ 
~\textsf{5.} & \quad \quad \textsf{$i \leftarrow i + K(\sigma-dim,\ell-j)$}\\
~\textsf{6.} & \quad \textsf{$\ell \leftarrow \ell - c_{dim}$ }\\
~\textsf{7.} & \textsf{return $i$}\\
&\\
&\\
&\\
&\\
\hline
\end{tabular}
&
\begin{tabular}{|rl|}
\hline
\multicolumn{2}{|l|}{~\textsc{Index2Vector}}\\
\multicolumn{2}{|l|}{~\textsc{Input:} $i$, $\ell$, $\sigma$}\\
\multicolumn{2}{|l|}{~\textsc{Output:} The vector $C=\langle c_1,c_2,\ldots , c_\sigma\rangle$ ranked }\\
\multicolumn{2}{|r|}{  $i^{th}$ in the list of all vectors whose inner sum is $\ell$}\\
~\textsf{1.} & \textsf{for $dim \leftarrow 1$  to $\sigma-1$ do}~\\ 
~\textsf{2.} & \quad \textsf{$c_{dim} \leftarrow 0$}~\\ 
~\textsf{3.} & \quad \textsf{while $(\ell >0)$ do}~\\ 
~\textsf{4.} & \quad \quad \textsf{$t \leftarrow K(\sigma-dim,\ell)$}\\
~\textsf{5.} & \quad \quad \textsf{if $i \geq t $ }\\
~\textsf{6.} & \quad \quad \quad \textsf{$i \leftarrow i - t $ }\\
~\textsf{7.} & \quad \quad \quad \textsf{$c_{dim} \leftarrow c_{dim} + 1 $ }\\
~\textsf{8.} & \quad \quad \quad \textsf{$\ell \leftarrow \ell - 1 $ }\\
~\textsf{9.} & \quad \quad \textsf{else break}\\
~\textsf{10.} & \textsf{$c_\sigma \leftarrow \ell $ }\\
~\textsf{11.} & \textsf{return $C=\langle c_1, c_2, \ldots , c_\sigma \rangle$}\\
\hline
\end{tabular}
\end{tabular}
\caption{The algorithms to find the index of a given vector (\textsc{Vector2index}) and to generate the vector given its
index, inner sum, and dimension (\textsc{index2vector}).}
\label{fig:2}
\end{scriptsize}
\end{center}
\end{figure}

As an example let's follow the algorithm to find the index of $C=\langle 2,1,1,0 \rangle$. We first need compute the
number of vectors whose first dimension is $0$, which is equal to the number of distinct $3$-dimensional vectors whose
inner
sum is $4-0=4$. In the same way we count vectors having $c_1=1$ that corresponds to the number of $3$--dimensional
vectors, where the dimension values sum up to $4-1=3$ now. Those are calculated with to $K(3,4)=15$ and
$K(3,3)=10$ respectively. After being done with $c_1$, we proceed to the next dimension with $c_2=1$. However, before
continuing with second item, we decrement the inner sum from $4$ to $2$ since now first dimension is fixed to $2$.
For the second dimension, we need to consider $2$-dimensional vectors summing up to $2-0=2$, which is $K(2,2)=3$.   
Before visiting the last position we set the inner sum to $4-2-1=1$, and dimension to $1$. Number of one dimensional
vectors having a value less then $c_3=1$ is $K(1,1)=1$. Thus, the index of the input $C$ is $15+10+3+1=29$.

The reverse \textsc{index2vector} algorithm in Figure \ref{fig:2} generates the vector, where the index and the inner
sum as well as the dimension of the vector are given. 
The computation of the values of the vector begins with the most significant position $c_1$. 
We first count all vectors that have a $0$ value at $c_1$, which is computed by $K(\sigma-1,\ell)$. 
If that count is less than or equal to our input index $i$, we decrement $i$ by that count and continue with calculating
the number of vectors that have a value of $1$ at $c_1$, which is equal to $K(\sigma-1,\ell-1)$ now.  We proceed in the
same way until $i$ becomes less than the number of vectors having some value at $c_1$, where that value is actually the
first dimension value we are seeking for. After setting $c_1$, we do the same processing for $c_2$, however, we reduce
the dimension by one and also consider the inner sum to be $\ell - c_1$. All values of the vector $C$ are calculated by
repeating the procedure similarly.

The time complexities of \textsc{vector2index} and  \textsc{index2vector} are $O(c_1 + c_2 + \ldots + c_{\sigma-1})$ 
and  $O(\ell)$ respectively. Note that actually $\ell = c_1 + c_2 + \ldots + c_\sigma$ in \textsc{index2vector}, and
hence, both algorithms are linear with the inner sum of the vector that is being enumerated.

\section{Enumerating $q$-ary sequences}

Let $T=t_1t_2\ldots t_n$ denote a given sequence of length $n$, where the symbols are drawn
from alphabet $\Sigma = \{\epsilon_1, \epsilon_2, \ldots \epsilon_\sigma\}$. 
We propose to factor $T$ into variable length blocks such that symbol $\epsilon_\alpha$ occurs $r$ times. 
Symbol $\epsilon_\alpha$ and $r$ are predetermined fixed parameters of the proposed factorization. 

Careful readers will realize that  if $t_it_{i+1}\ldots t_{i+j}$ is such a factor of $T$,  then $t_{i+j+1}$ is always
equal to $\epsilon_\alpha$. Thus, while coding the sequence we do not need to consider the subsequent characters of the
factors since they are known in advance. 

As an example assume $T=ttgaacgagaagccgtatgaaatgaaaatatcac$ is a given sequence of length $33$ over alphabet
$\Sigma={a,c,g,t}$, where the predetermined parameters of the proposed method are  $\epsilon_\alpha = a$, and
$r=2$. We pad $T$ with $2$ $a$ symbols to make sure that we have a valid block obeying the rule at the end.  
The factorization of $T$ is depicted in Figure \ref{fig:4}. Note that the last two $a$s on block $b_6$ do not belong
to original sequence, which can be determined at decoding phase once the length of the $T$ is known. Hence, the
proposed system requires to explicitly store three parameters as the sequence length $n$, selected symbol
$\epsilon_\alpha$, and the repeat count $r$ of the selected symbol. 
\begin{figure}
\begin{scriptsize}
\begin{center}
\renewcommand{\arraystretch}{1.2}
\begin{tabular}{r@{\hspace{5pt}}cccccccccccccccccccccccccccccccccccc}
&
\multicolumn{7}{c@{\hspace{5pt}}|}{$ttgaacg$} &$a$& 
\multicolumn{8}{|@{\hspace{5pt}}c@{\hspace{5pt}}|}{$gaagccgt$} &$a$&
\multicolumn{4}{|@{\hspace{5pt}}c@{\hspace{5pt}}|}{$tgaa$} &$a$&
\multicolumn{4}{|@{\hspace{5pt}}c@{\hspace{5pt}}|}{$tgaa$} &$a$&
\multicolumn{5}{|@{\hspace{5pt}}c@{\hspace{5pt}}|}{$atatc$} &$a$&
\multicolumn{3}{|@{\hspace{5pt}}c@{\hspace{5pt}}}{$caa$} \\

ID: &
\multicolumn{7}{c@{\hspace{5pt}}}{$b_1$} & & 
\multicolumn{8}{@{\hspace{5pt}}c@{\hspace{5pt}}}{$b_2$} &&
\multicolumn{4}{@{\hspace{5pt}}c@{\hspace{5pt}}}{$b_3$} &&
\multicolumn{4}{@{\hspace{5pt}}c@{\hspace{5pt}}}{$b_4$} &&
\multicolumn{5}{@{\hspace{5pt}}c@{\hspace{5pt}}}{$b_5$} &&
\multicolumn{3}{@{\hspace{5pt}}c@{\hspace{5pt}}}{$b_6$} \\
Length: &
\multicolumn{7}{c@{\hspace{5pt}}}{$7$} & & 
\multicolumn{8}{@{\hspace{5pt}}c@{\hspace{5pt}}}{$8$} &&
\multicolumn{4}{@{\hspace{5pt}}c@{\hspace{5pt}}}{$4$} &&
\multicolumn{4}{@{\hspace{5pt}}c@{\hspace{5pt}}}{$4$} &&
\multicolumn{5}{@{\hspace{5pt}}c@{\hspace{5pt}}}{$5$} &&
\multicolumn{3}{@{\hspace{5pt}}c@{\hspace{5pt}}}{$3$} \\
Sym.Freq.: &
\multicolumn{7}{c@{\hspace{5pt}}}{$\langle 2,1,2,2 \rangle$} & & 
\multicolumn{8}{@{\hspace{5pt}}c@{\hspace{5pt}}}{$\langle 2,2,3,1 \rangle$} &&
\multicolumn{4}{@{\hspace{5pt}}c@{\hspace{5pt}}}{$\langle 2,0,1,1 \rangle$} &&
\multicolumn{4}{@{\hspace{5pt}}c@{\hspace{5pt}}}{$\langle 2,0,1,1 \rangle$} &&
\multicolumn{5}{@{\hspace{5pt}}c@{\hspace{5pt}}}{$\langle 2,1,0,2 \rangle$} &&
\multicolumn{3}{@{\hspace{5pt}}c@{\hspace{5pt}}}{$\langle 2,1,0,0 \rangle$} \\
Perm. ID.: &
\multicolumn{7}{c@{\hspace{5pt}}}{$396$} & & 
\multicolumn{8}{@{\hspace{5pt}}c@{\hspace{5pt}}}{$ 852 $} &&
\multicolumn{4}{@{\hspace{5pt}}c@{\hspace{5pt}}}{$ 11$} &&
\multicolumn{4}{@{\hspace{5pt}}c@{\hspace{5pt}}}{$11 $} &&
\multicolumn{5}{@{\hspace{5pt}}c@{\hspace{5pt}}}{$ 7 $} &&
\multicolumn{3}{@{\hspace{5pt}}c@{\hspace{5pt}}}{$ 2 $} \\

\end{tabular}
\end{center}
\caption{Sample factorization of $T=ttgaacgagaagccgtatgaaatgaaaatatcac$ via the proposed method with parameters
$\epsilon_\alpha = a$, and $r=2$.}
\label{fig:4}
\end{scriptsize}
\end{figure} 

Remembering that $c_\alpha$ indicates occurrence count of $\epsilon_\alpha$ in $T$, the number of blocks with the
proposed factorization is $B = \lfloor \bigl(c_\alpha/(r+1)\bigr) + 1 \rfloor$. Each block can be specified
with three components as the block length, the symbol frequency vector of the block, and the permutation index of the
block. 

We suggest to encode the block lengths via arithmetic coding. When the symbols are ideally identically distributed over
$T$, the frequency of $\epsilon_\alpha$ is nearly $c_\alpha \approx n/\sigma$. We expect to have block lengths mostly in
between $r \cdot c_\alpha$ and $(r+1) \cdot c_\alpha$ symbols, and hence, can be coded/decoded efficiently. 

The frequency vector of each block can be enumerated with the methods described in previous section.
Notice that the $c_\alpha$ value is always $r$ in every block. Thus, it can be excluded from the frequency vector,
which means it is enough to enumerate $\sigma - 1$ values, whose total sum is $r$ less than the block length. For
example, in Figure \ref{fig:4}, the first block is $7$ symbols long and the corresponding symbol frequencies are
$\langle 2,1,2,2\rangle$. Excluding the first dimension that is known to be equal to $2$ in advance, we need to
enumerate $\langle 1,2,2\rangle$. There are $K(3,5)=21$ three dimensional vectors having an inner sum of $5$, which
means we have to spend $\lceil \log 21 \rceil = 5$ bits to indicate the target vector.
\begin{table}
\begin{scriptsize}
\centering
\renewcommand{\arraystretch}{1.2}
\begin{tabular}{c|c||c|c||c|c||c|c}
$PermID$ & Sequence & $PermID$ & Sequence & $PermID$ & Sequence & $PermID$ & Sequence \\ \hline
0 & $aacg$ & 3 & $acga$ &6  & $caag$ & 9  & $gaac$   \\
1 & $aagc$ & 4 & $agac$ &7  & $caga$ & 10 & $gaca$ \\
2 & $acag$ & 5 & $agca$ &8  & $cgaa$ & 11 & $gcaa$  \\ 
\multicolumn{8}{c}{}
\end{tabular} 
\caption{Lexicographically sorted list of all possible sequences over alphabet $\Sigma = \{a,c,g,t\}$ with a
sample frequency vector of  $C=\langle 2,1,1,0 \rangle$ .}
\label{table5} 
\end{scriptsize}
\end{table}

When we know the frequencies of each symbol, a block may be indicated by its rank among the lexicographically ordered
list of all its permutations. The algorithms to find the rank of a given sequence
(\textsc{Sequence2PermIndex}), and its reverse operation (\textsc{PermIndex2Sequence}) as generating the sequence given
the symbol frequencies and the rank are given in Figure \ref{fig:3}. The permutation
indices of the blocks shown in figure \ref{fig:3} are computed according to these procedures.

In \textsc{Sequence2PermIndex}, we initially set index to $0$ and scan the input sequence from left to right. 
At each position, we count how many possible strings precede input $S$  and add this value to index. As an example, let
us find the rank of $agca$. Possible permutations can be viewed in Table \ref{table5}. First character $a$ is the least
significant symbol of the alphabet and hence no symbols precede it. Next we investigate the number of strings with
prefixes $aa$ and $ac$, which makes index $2+2=4$. The last step is to count strings with prefix $aga$, which is $1$.
Thus, we compute that $agca$ has rank $5$ in a zero-based list. 

The \textsc{PermIndex2Sequence} is the reverse of this process. The time complexities of both algorithms are $O(\ell)$,
where $\ell$ is the length of input/output string.
\begin{figure}
\begin{center}
\begin{scriptsize}
\begin{tabular}{lr}
\begin{tabular}{|rl|}
\hline
\multicolumn{2}{|l|}{~\textsc{Sequence2PermIndex}}\\
\multicolumn{2}{|l|}{~\textsc{Input:} Sequence $S=s_1s_2\ldots s_\ell$}\\
\multicolumn{2}{|l|}{~\textsc{Output:} The rank $pid$ of $S$ in the }\\
\multicolumn{2}{|r|}{ordered list of all sequences }\\
\multicolumn{2}{|r|}{with the same symbol frequencies}\\
~\textsf{1.} & \textsf{for $j \leftarrow 1$  to $\sigma$ do $c_j \leftarrow 0$}~\\
~\textsf{2.} & \textsf{for $j \leftarrow 1$  to $\ell$ do }~\\  
~\textsf{3.} & \quad \textsf{$k \leftarrow symbolID(s_j)$}~\\ 
~\textsf{4.} & \quad \textsf{$c_k \leftarrow c_k + 1$}~\\
~\textsf{5.} & \textsf{for $a \leftarrow 1$  to $\ell$ do}~\\ 
~\textsf{6.} & \quad \textsf{$k \leftarrow symbolID(s_a)$}~\\ 
~\textsf{7.} & \quad \textsf{for $j \leftarrow 1$ to $k-1$}~\\ 
~\textsf{8.} & \quad \quad \textsf{if $(c_j >0)$}~\\ 
~\textsf{9.} & \quad \quad \quad \textsf{ $c_j \leftarrow c_j - 1$}\\ 
~\textsf{10.} & \quad \quad \quad \textsf{$t \leftarrow \frac{\ell!}{c_1!\cdot c_2! \cdot \ldots \cdot c_\sigma!}$}\\
~\textsf{11.} & \quad \quad \quad \textsf{$pid \leftarrow pid + t$}\\
~\textsf{12.} & \quad \quad \quad \textsf{ $c_j \leftarrow c_j + 1$}\\ 
~\textsf{13.} & \quad \textsf{ $c_k \leftarrow c_k - 1$}~\\ 
~\textsf{14.} & \textsf{return $pid$}\\
&\\
\hline
\end{tabular}
&
\begin{tabular}{|rl|}
\hline
\multicolumn{2}{|l|}{~\textsc{PermIndex2Sequence}}\\
\multicolumn{2}{|l|}{~\textsc{Input:} $pid$, $C=\langle c_1,c_2,\ldots , c_\sigma\rangle$}\\
\multicolumn{2}{|l|}{~\textsc{Output:} The $pid^{th}$ sequence $S$ in the }\\
\multicolumn{2}{|r|}{ordered list of all sequences }\\
\multicolumn{2}{|r|}{with the same symbol frequencies}\\
~\textsf{1.} & \textsf{$temp \leftarrow 0$}\\ 
~\textsf{2.} & \textsf{$\ell = c_1 + c_2 + \ldots + c_\sigma$}\\ 
~\textsf{3.} & \textsf{for $a \leftarrow 1$  to $\ell$ do}~\\ 
~\textsf{4.} & \quad \textsf{for $j \leftarrow 1$  to $\sigma$ do}~\\ 
~\textsf{5.} & \quad \quad \textsf{if ($c_j > 0$)}\\
~\textsf{6.} & \quad \quad \quad \textsf{$c_j \leftarrow c_j -1$}\\
~\textsf{7.} & \quad \quad \quad \textsf{$t \leftarrow \frac{\ell!}{c_1!\cdot c_2! \cdot \ldots \cdot c_\sigma!}$}\\
~\textsf{8.} & \quad \quad \quad \textsf{$temp \leftarrow temp + t$}\\
~\textsf{9.} & \quad \quad \quad \textsf{if ($pid \geq temp$)}\\
~\textsf{10.} & \quad \quad \quad \quad \textsf{$c_j \leftarrow c_j +1$}\\
~\textsf{11.} & \quad \quad \quad \textsf{else}\\
~\textsf{12.} & \quad \quad \quad \quad \textsf{$s_a \leftarrow j$}\\
~\textsf{13.} & \quad \quad \quad \quad \textsf{$temp \leftarrow temp - t$}\\
~\textsf{14.} & \quad \quad \quad \quad \textsf{$pid \leftarrow pid - temp$}\\
~\textsf{15.} & \textsf{return $s_1s_2\ldots s\ell$}\\
\hline
\end{tabular}
\end{tabular}
\caption{The algorithms to find the permutation index of a given sequence (\textsc{Sequence2Permindex}) and to generate
the sequence given its permutation index along with symbol frequencies  (\textsc{permindex2sequence}).}
\label{fig:3}
\end{scriptsize}
\end{center}
\end{figure} 

Notice that the permutation index of a block need not to be coded when the corresponding frequency index includes only
one non-zero dimension, which means all symbols are same within that block, and that symbol is the one that is non-zero
in the $C$ vector.

\textbf{Analysis:} The enumerative encoding of block $b_i$ of length $|b_i|$, $0<i<B$, requires $\log |b_i|$ bits for
length, $\log K(\sigma-1,|b_i|)$ bits for enumeration of frequency vector, and $H^f_0(b_i)$ bits for the permutation
index, where $H^f_0(b_i)$ represents the zero-order finite set entropy of the sequence. Hence total space requirement
is
\begin{equation}
 \sum_{i=1}^{B} \log |b_i| + \log K(\sigma-1,|b_i|) + H^f_0(b_i)
\label{eq:7}
\end{equation}

It may be argued to use fixed--length blocks to remove the overhead of block length encoding. In this case, we
partition the sequence into $B^\prime$ factors of length $\frac{n}{B^\prime}$, and the total bits required becomes 
\begin{equation}
 \sum_{i=1}^{B^\prime} \log K(\sigma,\frac{n}{B^\prime}) + H^f_0(b_i)
\end{equation}

When we assume the number of blocks are set to be nearly equal on fixed and variable length models, although
we remove the $\log |b_i|$ term in fixed--length model, the frequency vector we need to encode in this case is $\sigma$
dimensional, where it is $\sigma-1$ dimensional in the variable model. This trade--off is investigated experimentally,
and it is observed  that variable length model performs better than fixed length on  DNA sequences (see next section).

Another point is the selection of the parameters $\epsilon_\alpha$ and $r$ for the proposed variable--length model. If
one chooses a frequent symbol, the lengths of the blocks become shorter and the number of blocks get larger. Similarly
small $r$ values increase block count and vice versa. The trade--off between the block length and block number is also
experimentally analyzed on DNA sequences, where it is observed that selecting rare symbols give better results.

\section{Implementation and experimental analysis on DNA sequences}

We have implemented an enumerative coding library with C++ that can be freely downloaded from
\url{www.busillis.com/EnumCodeLib.zip}. Thanks to GNU multiple precision arithmetic library\footnote{\url{gmplib.org}}
that lets us to overcome the big number problems due to factorial/combination/permutation calculations. Using this
library we have also implemented the proposed variable--length factorization as well as the fixed--length scheme, and
computed the number of bits required with each method on a series of DNA sequence files that are mostly used as the
benchmark corpus on DNA compression algorithms. The results are given in table \ref{tableRes}. 
\begin{table}
\begin{center}
\begin{scriptsize}
\renewcommand{\arraystretch}{1.2}
\begin{tabular}{rccccc|c|c|ccc}
 & & & & & & Finite-set  & Fixed-Len. & Variable-Len. &  & Avg. Block\\
File name & Size & $a$ & $c$ & $g$ & $t$ & $H_0$ Entropy & Entropy & Entropy & $\epsilon_\alpha$ & Length\\
chmpxx	&	121024	&	42896	&	17309	&	17556	&	43263	&	1.866	&	1.871
&	1.845	&	g & 504\\
chntxx	&	155844	&	47824	&	29991	&	28992	&	49037	&	1.957	&	1.969
&	1.953	&	g & 402\\
hehcmv	&	229354	&	49475	&	64911	&	66192	&	48776	&	1.985	&	1.975
&	1.977	&	a & 796\\
humdyst	&	38770	&	12001	&	7161	&	7011	&	12597	&	1.946	&	1.956
&	1.950	&	c & 692\\
humghcs	&	66495	&	17311	&	16271	&	16441	&	16472	&	1.999	&	2.028
&	1.999	&	a & 493\\
humhbb	&	73308	&	22068	&	14146	&	14785	&	22309	&	1.967	&	1.972
&	1.956	&	t & 424\\
humhdab	&	58864	&	13422	&	14846	&	15906	&	14690	&	1.997	&	1.999
&	1.977	&	t & 516\\
humprtb	&	56737	&	15689	&	11281	&	11599	&	18168	&	1.970	&	1.990
&	1.963	&	g & 630\\
mpomtcg	&	186609	&	53206	&	39215	&	39924	&	54264	&	1.983	&	1.998
&	1.983	&	c & 614\\
mtpacga	&	100314	&	35804	&	13428	&	16724	&	34358	&	1.879	&	1.882
&	1.883	&	c & 955\\
vaccg	&	191737	&	63921	&	32010	&	32030	&	63776	&	1.919	&	1.929
&	1.926	&	c & 770 \\ \hline \hline
\multicolumn{6}{r|}{\textbf{Average entropy in bits/base}} & \textbf{1.952} & \textbf{1.961} & \textbf{1.946} & & \\
\multicolumn{6}{c}{} \\
\end{tabular}
\end{scriptsize}
\end{center}
\caption{Comparison of results obtained by fixed-- and variable--length enumerative coding on sample DNA sequences. The
block length for the fixed--length scheme is 2048, where the $r$ parameter is $128$ on variable--length }
\label{tableRes}
\end{table} 

We tested windows of lengths $4$, $8, 16,32, \ldots, 2048$ for the fixed--length scheme. We observed that best results
have been observed by the longest size $2048$. 
We followed a two--pass procedure for the variable--length enumeration. In the first pass we calculated the block
lengths, and in the second phase we computed the frequency-vector enumerations as well as the permutation identities of
each block. The \textit{bits per base} entropy values has been computed with 
$\frac{1}{filesize} \sum_{i=1}^{B} \log |b_i| + \lceil \log K(\sigma-1,|b_i|) \rceil + \lceil \frac{|b_i|!}{c_1!\cdot
c_2! |ldots c_q!}\rceil$. We used $r$ values of $4,8,16,32,64$, and $128$ for each possible
$\epsilon_\alpha=\{a,c,g,t\}$.  We observed best
results with $r=128$ and when $\epsilon_\alpha$ is usually one of the least frequent symbol. The corresponding average
block lengths are depicted in the table. 

Proposed variable--length model achieved better results than the naive fixed--length model. Note that the entropy
measured with the variable--length scheme is on the average better than the possible theoretical zero--order
finite set entropy of the target sequence. That is due to saving one symbol per block with the introduced model, and
the skewed distribution of symbols in blocks, which enabled us to identify them with less bits via permutation indices.
Another important point is proposed variable-length block enumeration scheme achieved  better results with shorter 
blocks than the naive fixed--length scheme, which is also important as it requires more time to compute the vector and
permutation indices on longer lengths. 

\section{Conclusions}

We have investigated the enumerative encoding of $q$--ary sequences by 
introducing a new method, where we factorize the input sequence into variable size blocks and then
represent each block via its length, enumerated coding of its symbol frequency vector, and the regarding permutation
index. We have described the necessary procedures required for vector-to-index and sequence-to-index conversions along
with their reverse operations for appropriate decoding.  
We have devised experiments on DNA sequences to compare the performances of the proposed variable-- versus naive
fixed--length enumerations with respect to the zero order finite set entropies of the sample sequences. 

The enumeration library used in this
study, which we have implemented via the GNU multiple precision arithmetic toolbox,  is publicly
available\footnote{\textit{EnumCodeLib} at \url{www.busillis.com/EnumCodeLib.zip}}. We believe that one of the main
reasons for relatively less attention paid on enumerative coding is because of the lack of such publicly available
resources. Notice that one can find many libraries for arithmetic, Huffman, or dictionary based Liv-Zempel
factorizations, however, not for enumerative coding. 

It would be interesting to investigate more options on enumeration based approaches in data compression, as well as
text indexing. Benefiting from enumerative coding in information theoretic analyses of biological sequences is
yet another direction.

\bibliography{tapas12}

\end{document}